\newcommand{\dd}{{\rm d}}
\newcommand{\bd}{\begin{definition}}                
\newcommand{\ed}{\end{definition}}                  
\newcommand{\bc}{\begin{corollary}}                 
\newcommand{\ec}{\end{corollary}}                   
\newcommand{\bl}{\begin{lemma}}                     
\newcommand{\el}{\end{lemma}}                       
\newcommand{\bp}{\begin{proposition}}            
\newcommand{\ep}{\end{proposition}}                
\newcommand{\bere}{\begin{remark}}                  
\newcommand{\ere}{\end{remark}}                     
\newcommand{\bt}{\begin{theorem}}
\newcommand{\et}{\end{theorem}}
\newcommand{\be}{\begin{equation}}
\newcommand{\ee}{\end{equation}}
\newcommand{\bit}{\begin{itemize}}
\newcommand{\eit}{\end{itemize}}
\newtheorem{theorem}{Theorem}[section]
\newtheorem{corollary}[theorem]{Corollary}
\newtheorem{lemma}[theorem]{Lemma}
\newtheorem{proposition}[theorem]{Proposition}
\theoremstyle{definition}
\newtheorem{definition}[theorem]{Definition}
\theoremstyle{remark}
\newtheorem{remark}[theorem]{Remark}
\begin{document}

\title{
Augustine of Hippo's philosophy of time  meets general relativity\thanks{This version includes the proof of Theorem 3.4 which is not included in the version  published in Kronoscope 14 (2014) 71-89. Previous title: Can God find a place in physics? St.~Augustine's philosophy
meets general relativity.}}

\author{E. Minguzzi\thanks{
Dipartimento di Matematica Applicata ``G. Sansone'', Universit\`a
degli Studi di Firenze, Via S. Marta 3,  I-50139 Firenze, Italy.
E-mail: ettore.minguzzi@unifi.it} }

\date{}

\maketitle

\begin{abstract}
\noindent A cosmological model is proposed which uses a causality
argument to solve the homogeneity and entropy problems of cosmology.
In this model a chronology violating region of spacetime causally
precedes the remainder of the Universe, and a theorem establishes
the existence of time functions precisely outside the chronology
violating region. This model is shown to nicely reproduce Augustine
of Hippo's thought on time and the beginning of the Universe. In the
model the spacelike boundary representing the Big Bang is replaced
by a null hypersurface  at which the gravitational degrees of
freedom are almost frozen while the matter and radiation content is
highly homogeneous and thermalized.
%
%

\end{abstract}

\section{Introduction}

In this work I shall present a cosmological model allowed by general
relativity which can be regarded as a mathematical representation of
previous ideas by 
Augustine of Hippo  (354 - 430)  on time and the creation of the
Universe. This model provides also some novel and natural solutions
to the homogeneity and entropy problems of cosmology.

Instead of introducing the model directly, in order to clarify the
correspondence with 
Augustine's thought,  we shall first investigate whether and how to
include a notion of God  in theoretical physics. Taking into account
that physics is expressed in the mathematical language we shall seek
a mathematical object that could represent at least some aspects of
what in common language we call God.

Of course, whatever definition of God is given it cannot be able to
cover all the  ideas that circulate concerning the nature of God.
For this essay I  shall start from a quite restrictive portion of 
Augustine philosophy in which  statements are made that may
allow us to identify a cosmological notion of God.

In order to avoid misunderstandings I stress that this paper is not
concerned about the problem of the existence of God, nor does it make
claims in this respect. Its main goal is to introduce some causality
arguments that may prove important in cosmology, and to stress the
amusing similarities with previous ideas by 
Augustine. Of course in order to make sense of those, one has to
assume, for the sake of the argument, the existence of God, for
otherwise it would be impossible to follow Augustine's philosophical
thought on the origin of time and the beginning of the Universe. In
particular, in this work we wish to clarify and emphasize
Augustine's amazing ability to distinguish between temporality and
causality, something that will find a mathematical expression only
with the advent of the general theory of relativity.

The reader interested on how modern cosmology has influenced
philosophers'
 arguments for God's existence may consult Craig and
Smith's book \cite{craig93} or the papers \cite{smith98,cahoone09}.
Craig's \cite{craig01} and Ganssle's \cite{ganssle01} monographs
give also a good account of time theories and how they relate to the
idea of God, in particular they treat the question of whether God
should consistently be thought as a temporal or timelessness entity,
a question that we shall also meet in what follows. Nevertheless,
the reader should keep in mind that those arguments may need to be
modified if the picture of the beginning of the Universe proposed in
these pages turns out to be correct, as the spacelike Big Bang
hypersurface would be replaced by a null hypersurface that continues
in a chronology violating region.



%
%
%
%

\section{Augustine's conclusions on the nature of God}


To begin we need to explore some ideas concerning the
nature of God. Here I shall consider some conclusions reached by the
philosopher 
Augustine that are largely independent of the
sacred texts 
and which are shared by different religions. They are:
\begin{enumerate}
\item[1.] There is an entity which we call { God} that satisfies the following points.
\item[2.] God has created the world.
\item[3.] God cannot be wrong.
\end{enumerate}
Our analysis will involve only these assumptions on the God side,
while for those on the scientific side we shall take from our
present knowledge of physics.

At the time of 
Augustine the Manich{\ae}ans asked the following
question "What was God doing before creating the world?".
Any answer to the question seems to involve a paradox. If God
created the world at one time and not at a previous time then God
changed his mind concerning the possibility of creating the world,
thus he was wrong in not creating it in the first place. The only
conclusion is that God cannot have created the world, because
whatever decision God takes he has already taken it.
This is a clear conflict with point 2 that states that God has
created the world, and thus that the world has not always existed.
The conclusion of the argument is in fact more general: the will of
God is eternal as there cannot be discontinuities in it, and so
should be all the creations that follow from that will.

Augustine's famous reply can be found in the XI book of the
{\em Confessions} \cite{augustine92,augustine07}. This book contains
one of the most fortunate studies of the concept of time especially
in the chapters starting from 14 where one can find the famous
sentence ``What, then, is time? If no one ask  me, I know; if I wish
to explain to him who asks, I know not.'' The reply that interests
us is contained in chapters 10-13 where he considers some issues
relating time, creation and God.

 First he states that he will not reply (Chap. 12, Par. 14)
\begin{quote}
``like the man that, they say, answered avoiding with a joke the
pressure of the question: `God was preparing the hell for those who
pry into such deep mysteries'. A thing is to understand, and another
thing is to jeer. I will not answer this way. I would more likely
answer: `I know not what I know not' [ ].''
\end{quote}


Augustine goes on clarifying that with {\em world} one must
understand all the creations of God. He accepts the conclusion that
the will of God is eternal, but denies that from that it follows the
eternity of the world. According to 
Augustine  all the times
are created by God itself so that God  comes ``before'' every time
although this ``before'' must  be understood in a causal but not in
a temporal way. In fact 
Augustine writes  (Chap. 13, Par. 16)

\begin{quote}
``It is not in time that you precedes the times. Otherwise you would
not precede them all. [ ]
You are always the same, your years never end. Your years neither come
nor go; ours instead come and go, for all of them will come. Yours
are all together because they are stable; they don't go because of
those coming, as they do not pass. Instead these, ours, will be when
all shall cease to be. Your years are one day, and your day is not
daily, but today; because your today yields no tomorrow, nor it
follows yesterday. Your day is the eternity [  ]. You created all
the times and before the times you were, and without a time there
wouldn't be any time''.
\end{quote}

Note that 
Augustine deduces, as the Manich{\ae}ans did, that
the will of God cannot change, but he does not find in that any
contradiction. For him, God does not perceive time as we do; not
only is God's will  in a kind of permanent state but it is its very
perception of time which shares this same permanence, this same
eternal state.

I regard 
Augustine reply to the Manich{\ae}an question as
logical given the premises. Of course although I claim that 
Augustine reply is logical I do not claim that  with these
considerations we are making science. Indeed, the main difficulty
relies in the quite unclear subjects and verbs entering points 1, 2
and 3. However, this problem cannot be avoided from the start. The
purpose of this work is to convert in a more rigorous language the
earlier sentences. For the moment let me
summarize what 
Augustine deduced from 1, 2 and 3 in the following additional
points.
\begin{itemize}
\item[4.] The will of God is eternal.
\item[5.] God created all the times, in particular God precedes all the times in
a causal way. Nevertheless, God does not precede the times in a
temporal way as the times did not exist before their creation.
\item[6.] Although God is not in our time, there is a kind of God's perception of time radically different from that of humans.
For God time is still, eternal, it is not perceived as a flow.
\end{itemize}
It is somewhat puzzling that 
Augustine used repeatedly the word
`times' in the plural form. Perhaps this is due to the fact that,
although we often regard the  Newtonian absolute time as the most
intuitive and widespread notion of time, it wasn't so for
Augustine. Another reason could be related with the concept
of psychological and hence subjective time that 
Augustine had
certainly elaborated (``Is in you, my mind, that I measure time''
Chap. 28, Par. 36). We shall return on the relevance of this maybe
accidental plurality later.

\section{The chronology protection conjecture}

Starting from assumptions 1, 2 and 3 we have been able to derive
further facts on God's nature given by points 4, 5 and 6. Despite
their somewhat vague formulations these conclusions will prove quite
stringent. Indeed, as we shall see, points 3, 4 and 6 will suggest
the mathematical object through which we could represent Augustine's
cosmological God, while points 1, 2 and 5 will allow us to put
further constraints on a Universe admitting a God. In particular
these constraints will offer new solutions to some old cosmological
problems.


We now need to assume some familiarity of the reader with general
relativity.  In short the spacetime $(M,g)$, is  a time oriented
4-dimensional manifold endowed with a Lorentzian metric of signature
$(-,+,+,+)$. The points of $M$ are called events. A non-vanishing
tangent vector $v\in TM_p$ can be spacelike, lightlike or timelike
depending on the value of $g(v,v)$ respectively, positive, zero or
negative. The lightlike directions give the direction of propagation
of light. The terminology extends to curves $\gamma: I\to M$,
provided the causal characterization of the tangent vector is
consistent throughout the curve. If there is a timelike curve
connecting two events $p$ and $q$ we write $p\ll q$ or $q\in I^+(p)$
or $(p,q)\in I^+$, where $I^+$ is the {\em chronology relation}. If
two events are connected by a causal curve or they are the same we
write $p\le q$ or $q\in J^+(p)$ or $(p,q)\in J^+$, where $J^+$ is
the {\em causal relation}.

%

It is  widely held that any reasonable spacetime should satisfy, along
with Einstein's equations, some additional causality requirement
\cite{hawking73}. One of the weakest requirements that can be
imposed on spacetime is that of chronology: there are no closed
timelike curves (sometimes called CTC).


The  fundamental problem of justifying chronology has received less
attention than deserved.
It is quite easy to construct solutions of the
Einstein equations that violate chronology, consider for instance
Minkowski spacetime with the slices $t=0$ and $t=1$ identified, or
think of G\"odel or Kerr's spacetimes. Thus the problem is not if
spacetime solutions of the Einstein equations can admit CTCs but
rather if reasonable spacetimes not presenting CTCs may develop
them.

S.\ Hawking argued that the laws of physics will always prevent a
spacetime to form closed timelike curves, in fact he raised this
expectation to the status of conjecture, now called {\em chronology
protection conjecture} \cite{thorne93}. According to it  the effects
preventing the formation of CTCs may also be quantistic in nature,
in fact Hawking claims that the divergence of the stress energy
tensor at the chronology horizon (i.e.\ the boundary of the {\em
chronology violating set}, the latter being  the region over which
CTCs pass) would be the principal candidate for a mechanism
preventing the formation of CTCs.

Despite some work aimed at proving the chronology protection
conjecture its present status remains quite unclear with some papers
supporting it and other papers suggesting its failure
\cite{tipler77,visser96,li96,li98,krasnikov02}. Some people think
that in order to solve the problem of the chronology protection
conjecture a full theory of quantum gravity would be required
\cite{hawking92,gott98}.

Apart from the technical motivations, the principal reason behind
the rejection of spacetimes presenting chronology violations remains
mostly a philosophical one. A closed timelike curve represents an
observer which is forced to live an infinite number of times the
same history (the grandfather paradox).


It is simply unacceptable that a human being, or any other entity
presenting some form of free will, be stuck into a cycle in which
always the same decisions are taken. Whatever a closed timelike
curve might represent there seems to be consensus that it cannot
represent the concept of ``observer'' to which we are used in
physics.

Nevertheless, whereas the usual notion of ``observer'' cannot be
represented by a CTC worldline, Augustine's cosmological God may
indeed be represented by such worldline. Indeed, we have seen that
according to middle-age philosophy  God has an eternal will (point
4) thus faced with the same conditions he would pass through the
same decisions. It cannot change direction because he  confirms the
correctness of the previous decision each time he is facing it.

It is curious that despite the fact that general relativity does not
model the concept of free will, the presence of CTC, by producing an
obstacle to this notion, provides a contact to philosophical
considerations that otherwise would be unrelated with this theory.

Now, we have to expand some more on the consideration that
Augustine's cosmological God may be modeled by a CTC. First recall
that the chronology violating set $\mathcal{C}$ is made of all the
points $p \in M$ such that $p\ll p$. This set splits into
equivalence classes $[p]$ by means of the equivalence relation
$p\sim q$ if $p\ll q\ll p$. In other words $p$ and $q$ belong to the
same equivalence class if there is a closed timelike curve passing
through both $p$ and $q$. Moreover, in this case the timelike curve
is entirely contained in $[p]$.
 It is possible to prove that the sets $[p]$
are all open in the manifold topology.

If $p$ and $q$ belong to the same chronology violating class then
they have  the same chronological role, in fact as $p \ll q$ and
$q\ll p$ it is not possible to say which one comes before. They are
in a sense `simultaneous'. Indeed, $p$ can be connected to $q$ also
by a lightlike causal curve and the same holds in the other
direction, thus it is indeed possible to move from $p$ to $q$ and
then from $q$ to $p$ in zero proper time. In particular any timelike
curve passing through $[p]$ would not cross events that follow `one
after the other' but rather almost equivalent events, actually
chronologically undistinguishable. This picture fits well with point
6, that is, with 
Augustine conclusion that ``Your years are one
day, and your day is not daily, but today; because your today yields
no tomorrow, nor it  follows yesterday. Your day is the eternity [
].'' All that suggests to regard God not as a single CTC, in fact
given one, one would get an infinite number of them in the same
chronology violating class, but rather as a chronology violating
class $[p]$ itself. This class $[p]$ has also to satisfy point 1,
which we convert into the mathematical statement $M=I^{+}([p])$,
namely any point of $M$ is chronologically preceded by a point of
God.

Thus we are led to the following definition

\begin{definition}
On a spacetime $(M,g)$,  we call {\em God} a chronology violating
class $[p]$ such that $M=I^{+}([p ])$.
\end{definition}
I will write this concept in italics in order to distinguish this
technical notion from Augustine's notion of cosmological God that we
met in the previous sections and that inspired it.

Note that given  a {\em God}, then any point of {\em God} generates
$M$ in the sense that $p \in \ God$ $\Rightarrow I^{+}(p)=M$, and
thus generates itself $p\ll p$. In suggestive terms, any portion of
{\em God} creates itself and the whole world.

Provided {\em God} exists it is unique, as the following theorem
proves
\begin{theorem}
There is at most one chronology violating class $[p]$ such that
$M=I^{+}([p])$.
\end{theorem}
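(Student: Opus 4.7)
The plan is very short: assume two such classes exist and show they must coincide by exhibiting a timelike round trip between representative points.

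More concretely, suppose $[p]$ and $[q]$ are both chronology violating classes satisfying $M=I^{+}([p])=I^{+}([q])$. I would first exploit the hypothesis $M=I^{+}([p])$ applied to the point $q\in M$: this yields some $p'\in [p]$ with $p'\ll q$. Symmetrically, $M=I^{+}([q])$ applied to $p\in M$ produces some $q'\in [q]$ with $q'\ll p$. At this stage the proof essentially writes itself.

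Next, I would chain these relations using the definition of the equivalence $\sim$. Because $p\sim p'$, by definition $p\ll p'$, and combining with $p'\ll q$ via transitivity of $\ll$ gives $p\ll q$. In the same way, $q\sim q'$ gives $q\ll q'$, and together with $q'\ll p$ we obtain $q\ll p$. Hence $p\ll q\ll p$, i.e.\ $p\sim q$, which means $[p]=[q]$.

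I do not see a genuine obstacle: the argument only uses transitivity of the chronological relation $\ll$ and the definition of the equivalence classes $[\cdot]$, both of which are standard and were already invoked in the preceding discussion. The only minor point of care is to apply $\ll$-transitivity correctly through a representative of each class (rather than directly with $p$ or $q$, which need not lie in the chronological past of points of the other class a priori), and this is precisely what is arranged by passing through $p'$ and $q'$.
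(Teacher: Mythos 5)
Your argument is correct and is essentially the paper's own proof: the paper simply compresses your chaining through representatives into the observation that $I^{+}([p])=I^{+}(p)$, so that $q\in I^{+}(p)$ and, by symmetry, $p\in I^{+}(q)$, giving $p\sim q$. Your explicit use of $p'$, $q'$ and transitivity of $\ll$ is just an unpacked version of that same step, so there is nothing to add.
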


\noindent {\em Proof}. Indeed,  $M= I^{+}([q])=I^{+}([p])$ implies
$q \in M= I^{+}([p])=I^{+}(p)$, and with the roles of $p$ and $q$
interchanged we get $p \in I^{+}(q)$, thus $p\sim q$ and hence
$[p]=[q]$. $\square$ \\

Since to any chronology violating class $[r]$ not satisfying
$M=I^{+}([r])$ we can still apply the arguments relating it to
points 4 and 6, we give the following definition

\begin{definition}
 We call {\em minor God} a chronology violating class which is not a
 {\em God}.
\end{definition}

Now, the chronology protection conjecture in its original
formulation may be rephrased as follows ``there are no {\em minor
Gods}'', in fact the chronology protection conjecture, roughly
speaking, states that chronology violating regions cannot form but
does not state that they cannot exist since the beginning of the
Universe. I must say, however, that  any mechanism accomplishing the
chronology protection would probably exclude, once applied to the
backward direction, also any chronology violating region. Probably
the issue as to whether there could be a mechanism that removes {\em
minor Gods} while keeping a {\em God} could be answered only by
 showing the details of the chronology protection mechanism.

Let us assume for simplicity that there are no {\em minor Gods} and
let us show in which way the definition of {\em God} satisfies point
5. Recall that a {\em time function} is a continuous function $t:
M\to \mathbb{R}$ such that $x<y \Rightarrow t(x) < t(y)$, namely a
function that increases over every causal curve. For instance any
observer in Minkowski spacetime has its own time function.

Clearly, no time function can exist in the presence of a CTC,
because if $p \ll q\ll p$ then $t(p) <t(q) < t(p)$, which is
impossible. Indeed, the presence of a time function is equivalent to
{\em stable causality} (i.e.\ causality is stable under sufficiently
small perturbations of the metric) which is a much stronger
causality property than chronology. Given one time function one has
that a multitude of time functions exist.

Nevertheless, although $M$ does not admit a time function, the
spacetime $M\backslash \bar{\mathcal{C}}$ with the induced metric
may indeed admit a time function and hence many of them. In other
words, the part of spacetime not containing {\em God} (or better its
closure) may admit time functions. In this sense {\em God} precedes
the region of the Universe were time makes sense, but in a causal
rather that a temporal way as those time functions are not defined
in the region of {\em God}. This is exactly 
Augustine's
conclusion summarized by point 5.

The nice fact is that not only $M\backslash \bar{\mathcal{C}}$ {\em
may} admit a time function, but that it {\em must} admit a time
function, provided null geodesic completeness and other reasonable
physical conditions are satisfied. For more details on these
conditions see \cite{minguzzi08d,hawking73}.


\begin{theorem} \label{vgs}
Let $(M,g)$ be a spacetime which admits  no chronology violating
class but possibly for the one, denoted $[r]$, which generates the
whole universe, i.e. $I^{+}([r])=M$. Assume that the spacetime
satisfies the null convergence condition and the null genericity
condition on the lightlike inextendible geodesics which are entirely
contained in $M\backslash\overline{[r]}$, and suppose that these
lightlike geodesics are complete. Then the spacetime $M\backslash
\overline{[r]}$ is  stably causal and hence admits a time function.
\end{theorem}

(For the proof see the appendix.)

The fact that the assumption of null geodesic completeness may be
actually compatible with the singularity theorems is discussed
in \cite{minguzzi08d}.

In conclusions we have given a definition of God that satisfies some
technical properties which represent pretty well points 2-6.

The figure \ref{god} summarizes the picture of a spacetime admitting
a {\em God}. There are in fact solutions of the Einstein equations
admitting a similar causal structure. The most important is the
Taub-NUT metric, which so far has not been considered as a serious
candidate for a cosmological solution. Here I would like to suggest
that if not the metric structure,  at least the causal structure of
the Taub-NUT solution  could indeed be similar to that of our
Universe. In fact sometimes causal structures like Taub-NUT are
dismissed on the ground that they have no `Big Bag', no initial
singularity, a fact which would contradict Hawking's singularity
theorem and observations.

This conclusion is incorrect: Hawking's (1967) singularity theorem
states that, given an expanding cosmological flow and some other
conditions, there should be some past incomplete timelike geodesic.
However, this timelike geodesic may well be totally imprisoned in a
compact set. In this case it may spiral towards the boundary of the
chronology violating set without reaching it. In this picture the
`Big Bang' is replaced by the boundary of the chronology violating
region, exactly that slice that separates {\em God} from the rest of
the Universe. Finally, its hot nature seems to fit well with the
said divergence of the stress energy tensor that is expected
according to the chronology protection conjecture. In fact there is
also the possibility that the matching between the chronology
violating set and the rest of the universe be accomplished up to a
singular scale transformation. In this case the causal structure
would be perfectly meaningful as a whole but the metric would not as
it could not be continued through the boundary. For more details on
these extension techniques see \cite{lubbe07}.

I conclude that it is possible to conceive a reasonable Universe
whose causal structure has features analogous to Taub-NUT (Misner)
and that then, after an initial phase, has the light cones tilted to
match an expanding FLRW Universe. Spacetimes presenting some of
these elements are for instance the $\lambda$-Taub-NUT spacetimes.

Similar models have already appeared in the literature. An important
article that anticipated some ideas considered in this work is
\cite{gott98}. However, while in that article the authors focused on
the problem of quantum field theory in spacetime with CTCs, here I
shall consider mainly the problems of homogeneity and entropy and
their relation with causality. In particular, I will introduce the
idea of the rigidity at the boundary of the chronology violating
region (see next section). R.\ Penrose \cite{penrose08} has also
advocated the possibility that the Big Bang could be only a layer
separating our observed Universe from a previous stage. An essential
difference with this proposal is the fact that he keeps a spacelike
Big Bang boundary, while as I shall explain, the null boundary
allows us to solve the homogeneity and entropy problems in a much
more natural way. Moreover, while he has to work with a cyclic
cosmology in order to satisfy the Weyl tensor hypothesis, we do not
need such constraint since we can justify Weyl tensor hypothesis
using the null character of the Big Bang.

\begin{figure}[ht]
\begin{center}
 \includegraphics[width=10cm]{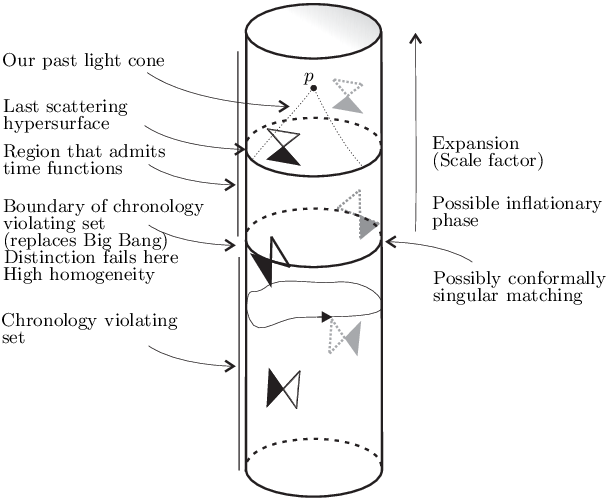}
\end{center}
\caption{A Universe with $S^1$ section which gives an idea of the
cosmological picture presented in this work. The region that admits
time functions is causally preceded by the chronology violating set
({\em God}) as in 
Augustine's conclusions.} \label{god}
\end{figure}

\section{The homogeneity
and entropy problems}

The cosmic microwave background (CMB) formed when, after a sufficient expansion of the Universe, the density of matter decreased to a level that light decoupled from it (the mean free path of light became infinite). The set of events of departure of those photons form an ideal last scattering hypersurface. Today we observe just a portion of that hypersurface, namely the intersection of it with our past light cone. Since we observe that the CMB radiation has the same spectra (temperature) independently of the direction of observation, there is the problem of justifying such isotropy on the night sky. In fact the regions that emitted that radiation were so far apart that, according to the FLRW scenario, they didn't have any past point in common. This is the isotropy or homogeneity problem depending on whether one refers to the isotropy of temperature on the night sky or on the equivalent homogeneity of temperature on the last scattering hypersurface.

It is often claimed that inflation solves this problem. The idea is
that if a patch of space expands so much, in the initial phase of
the Universe, to include the whole surface we see today, then it
should be natural to observe homogeneity. This argument works only
if homogeneity is assumed at a different scale, actually at a much
smaller scale, prior to inflation, namely if the initial patch is
considered homogeneous.

Indeed, if we look closer and closer at, say, a crystal of salt,
although apparently homogeneous it will show atomic inhomogeneities
when zoomed sufficiently, that is, expansion produces homogeneity
only if homogeneity is already present at a much smaller scale.

This criticism has been moved to inflation by several authors, as
rather than solving the problem of homogeneity, inflation seems to
replace a type homogeneity assumption with another
\cite{calzetta92,goldwirth92,cornish96,penrose05}. R.\ Penrose
argues that inflation may well prove to be correct but not for the
initial arguments moved in its favor \cite{penrose05}.

Instead, the assumption that there is a chronology violating region
generating the whole Universe explains rather easily the homogeneity
of the CMB radiation. Indeed, the explanation has nothing to do with
the expansion of the Universe (namely to  the conformal scale
factor) but rather to its causal structure. In our model any point
$p$ in the last scattering hypersurface contains, in its own past,
the chronology violating region $[r]$, namely $[r] \subset I^{-}(p)$
and in fact its boundary. Thus the chronological pasts of the points
in the last scattering hypersurface share many points on spacetime,
and thus it is reasonable that they have similar temperatures.

Let us now make a few comments that will be useful in the discussion of the entropy problem. We have justified the isotropy of the CMB radiation showing  that the points in the last scattering hypersurface have chronological pasts which share the boundary of the chronology violating region (the new proposal for 'Big Bang').  Instead, in the inflationary picture it was assumed that the homogeneous temperature  at the last scattering hypersurface was attained through a process called thermalization according to which causally disconnected regions at the Big Bang came into contact reaching a common temperature before decoupling between matter and radiation (last scattering hypersurface). R. Penrose \cite{penrose79,penrose05}  has pointed out that the thermalization mechanism cannot be considered a satisfactory explanation for homogeneity as it conflicts with the so called entropy problem to which we shall return in a moment.

Our solution to the isotropy problem of the CMB radiation does not require thermalization because the almost constant temperature on the last scattering hypersurface comes from the fact that these points share most of their chronological past independent of whether the events on their past  have the same temperature. Moreover, it seems likely that, according to our proposal of causal structure, the universe would be already at a very homogeneous state at the boundary of the chronology violating region ('Big Bang'). Indeed, some mathematical results, connected with the concept of compactly generated Cauchy horizon and imprisoned curves \cite{krolak04,minguzzi07f}, suggest that this boundary must be generated by lightlike geodesics whose closure is exactly the boundary (as it happens in figure \ref{god}). A well known open conjecture states that the compact Cauchy horizons under positivity of energy (e.g., null energy condition) are differentiable, where the generators do not escape the boundary neither in the future nor in the past direction. The existence of the generators implies that these horizons admit global vector fields, a fact that constrains their topology (for instance, in 2+1 dimension it must be a torus while in 3+1 dimensions there are more possibilities). These results  are also expected to hold for the compact boundaries of chronology violating sets since their mathematical properties are similar to those of Cauchy horizons.

Given any two points
on the boundary $p,q$ one would have $q \in \overline{I^{+}(p)}$ and
$p \in \overline{I^{+}(q)}$, thus in practice they could be
considered as causally related. As they can communicate through the
boundary, this boundary is expected to attain an homogeneous
temperature prior to any subsequent expansion. Since the development of Quantum Field Theory under CTC is at a early state of development, this claim is only a speculation. The analogy is that of a quantum field over a torus  where the closed geodesic generators of the torus are replaced by the generators of the Cauchy horizon. The boundary conditions allow us to expand it in the Fourier modes, and the most relevant one is that of lower excitation for which the field is constant, namely homogeneous.

Of course this mechanism may be followed by that of inflation, but
we point out that it does not seem to be necessary. Indeed, the main
accomplishment of inflation seems to be its ability to predict the
correct density inhomogeneities over the homogeneous background.
Hollands and Wald \cite{hollands02} have recently argued  not
only that inflation does not satisfactorily solve the homogeneity problem
but also that the desired scale free spectra of the perturbations
can be obtained even in the absence of inflation. They therefore claim
that the main problem is that of homogeneity/isotropy as they could
not find any dynamical mechanism for it. We argued that such a
mechanism exists, the solution lies in assuming the existence of a
chronology violating region from which the Universe develops: a {\em
God} in our terminology.

\subsection{The entropy problem and the rigidity of achronal
hypersurfaces}

Let us come to the entropy problem. This difficulty of standard
cosmology arises when considering the huge difference between the
entropy of the Universe today with that at the time of the Big Bang.
R.\ Penrose by taking into account also the gravitational entropy,
has argued that the Universe at its beginning had probably to be
thermalized, to account for the homogeneity problem, but
nevertheless it had to be special as the calculation of the entropy
shows that it was much smaller than today.

 Penrose concludes that the gravitational degrees of
freedom had to be in a very special state. In his view the Universe
could increase in entropy despite its initial thermalization because
in the beginning the gravitational degrees of freedom were almost
frozen.


By the way, Penrose reaches this conclusion clarifying a common
misconception that attributes the initial small entropy of the
Universe to its size. Penrose shows that this position is untenable
by considering potentially recontracting universes.

He also notes that when  matter is left to the action of gravity it
tends to clump, passing from an homogeneous state to an
inhomogeneous one. The Weyl tensor increases because of this
clumping, and therefore this tensor may quantify in some sense the
amount of entropy contained in the gravitational degrees of
freedoms. Thus Penrose ends suggesting that in the beginning of the
universe the Weyl tensor had to be very small, and possibly zero.
This is Penrose's Weyl tensor hypothesis \cite{penrose79,goode91}.
We note that the important point is not that the Weyl tensor be zero
but rather that its components be  fixed as this seems to be enough
to guarantee that the gravitational degrees of freedom were
initially frozen.

In order to avoid misunderstandings we stress that the isotropy of the CMB radiation does not provide evidence for a vanishing Weyl tensor since decoupling, since this homogeneity is observed at a large cosmological scale and hence holds only for the averaged metric. The real Weyl tensor is non-zero at a local scale due to clumping, and its cumulated effects over a space section is non-vanishing.  The Penrose's Weyl tensor hypothesis states that this local cumulated contribution is zero at the Big Bang while it is far from zero at the present epoch and at decoupling.

The picture of the beginning of the Universe presented in this work
is likely to satisfy the Penrose's Weyl tensor hypothesis. Indeed,
as I mentioned, the boundary of the chronology violating region
would be generated by lightlike geodesics (which are moreover
achronal). Now, there is a {\em rigidity result} \cite{beem96} which
states that an  asymptotically simple vacuum spacetimes is isometric
to Minkwoski spacetime in a neighborhood of every achronal lightlike
geodesic (Galloway's null splitting theorem \cite{galloway00}). I
expect that analogous results should hold for the case considered in
this work, that is, I expect the spacetime near the boundary of the
chronology violating region to be isometric to some highly symmetric
spacetime. This rigidity would clearly  fix the Weyl tensor and thus
send to zero the degrees of freedom contained in it.

In order to grasp  why an achronal boundary generated by lightlike
geodesics is able to fix the geometry  by constraining the Weyl
tensor, consider the equations for the expansion $\theta$ (measuring
the divergence of the transverse section to the flow) and for the
shear $\sigma_{m n}$ (measuring the deformation) of the geodesics
running on such hypersurface \cite[Sect. 4.2]{hawking73}
\begin{align*}
\frac{\dd \theta}{\dd v}  &=- R_{a b} K^a K^b -2 \sigma^2
-\frac{1}{2} \theta^2,  \\
\frac{\dd \sigma_{m n}}{\dd v}&=-C_{m4n4} - \theta \sigma_{m n}-
\sigma_{m p} \sigma_{p n}+\delta_{m n} \sigma^2. \qquad (\textrm{sum
over p})
\end{align*}
where $m,n=1,2$, $K^a$ is the tangent vector to the null congruence,
$R_{a b}$ is the Ricci tensor, $C_{a b c d}$ is the Weyl tensor, $v$
is the affine parameter and a base adapted to the congruence has
been chosen.  Under future null completeness the first equation
implies the presence of a focusing point if $R_{a b} K^a K^b + 2
\sigma^2>0$ at some point of the null hypersurface. However, the
presence of a focusing point would contradict achronality thus
$\sigma_{m n}=0$ everywhere  and from the second equation one gets
that $C_{m4n4}=0$ all over the null hypersurface, a fact which may
be regarded as a partial confirmation of Penrose's Weyl tensor
hypothesis. Actually one can dispense with the assumption of future
null geodesic completeness on the null hypersurface provided this
hypersurface is compact. The idea is that if it would not hold then
by adapting Prop.\ 6.4.4 of \cite{hawking73} to the almost closed
case one could infer the presence of a closed  timelike curve in the
future of the null hypersurface contradicting the assumption that
this hypersurface bounds the only chronology violating class.

In short we have given an argument that supports the Weyl tensor conjecture and a solution to the entropy problem compatible with the arguments originally proposed by Penrose.

Clearly, a full proof will require further study since it necessary to show that the degrees of freedom of the Weyl tensor transverse to the Big Bang null hypersurface vanish. Still the mechanism proposed in this work is particularly effective in sending some components to zero, a fact that might indeed point to the validity of our physical assumption: that the Big Bang hypersurface is lightlike rather than spacelike.

We end by observing that the null genericity condition would not
hold for geodesics lying on the boundary of the chronology violating
region. Fortunately we do not need it in theorem \ref{vgs}, hence
its consequences are consistent with the rigidity of the boundary.

\section{Conclusions}

In this work I presented a picture for the beginning of the Universe which seems to be able to solve the isotropy and entropy problems. In essence the Big Bang has to be replaced with a null hypersurface such that all the points on it have the same chronological future (i.e. future distinction is violated). As a consequence, the points in the last scattering (spacelike) hypersurface have chronological pasts that contain one and hence all points of this null hypersurface, a fact that clarifies the observed temperature homogeneity.

While the solution of the isotropy problem is rather straightforward under this assumption, the solution to the entropy problem relies on some conjectures and speculations. In part, this is unavoidable since we follow Penrose's idea, which relates the entropy of the universe with the Weyl tensor. The existence of such relation is itself a strong speculation though supported by physical arguments. Using this idea, and assuming that  compact Cauchy horizons are differentiable, we were  able to infer the result that  the Big Bang, interpreted as a past Cauchy horizon, is generated by inextendible lightlike geodesics and hence that some components of the Weyl tensor vanish there, thus supporting Penrose's strategy of solution to the entropy problem.

If proved necessary, the just mentioned beginning of the Universe
may be followed by a period of inflation, so that it is indeed
possible to join the good accomplishments of inflation with the
solution of the homogeneity and entropy problems given by the above
idea.

By a stability argument, the spacetime once continued through the
null hypersurface must develop closed timelike curves. Indeed, a
spacetime in which the cones tilt in the opposite sense
 would have a null hypersurface (and hence a
failure of distinction) that disappears under a small perturbation
of the metric (see figure \ref{comp} or figure 37 of
\cite{hawking73}).

\begin{figure}[ht]
\begin{center}
 \includegraphics[width=10cm]{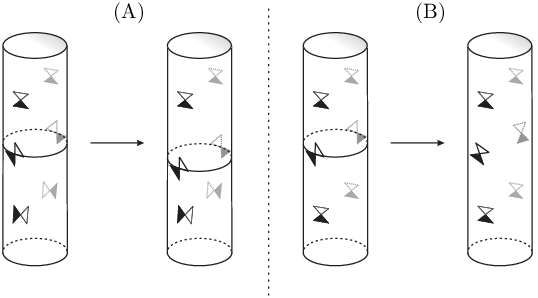}
\end{center}
\caption{The presence of the closed null hypersurface in place of
the spacelike Big Bang hypersurface allows us to give a causality
solution to the entropy and homogeneity problems of cosmology. A
stability argument shows that below the null hypersurface there must
be a chronology violating region. Indeed, in case (A) by a small
perturbation of the metric near its boundary the null hypersurface
moves up or down but does not disappear, hence the argument holds
true even after a small perturbation. Instead, in case (B) in which
there is no chronology violation, a small tilting of the light cones
in the forward direction near the hypersurface destroys the null
hypersurface. It must be noted that it is meaningless to make
(quantum) perturbation theory near the boundary, because any
perturbation pass to a perturbation of the metric and thus moves the
boundary itself.} \label{comp}
\end{figure}

With the aim of solving the homogeneity and entropy problem one is
therefore naturally led to the idea of a chronology violating region
from which the whole Universe has developed \cite{gott98}.

I showed that this picture for a Universe fits  well with some
conclusions reached by 
Augustine while he was answering some
questions raised by the Manich{\ae}ans. To appreciate the
correspondence it is necessary to identify God with the chronology
violating set that precedes the whole Universe.

I must say that I was developing the physical content of this work
before discovering 
Augustine thought in the  {\em Confessions}.
Nevertheless, I was so puzzled  by the correspondence that decided
to present them in conjunction so as to stress the similarities.
While doing so I discovered some unexpected results like theorem
\ref{vgs} which I missed in previous analysis of similar problems, a
fact which to my mind made the correspondence even more interesting.

One may ask how it happened that 
Augustine went so close to the
model of Universe presented in these pages, given that he certainly
ignored general relativity. My own opinion is that while one is thinking
about a subject there are many ways of coming to trivial or
incorrect conclusions, whereas only a few paths can lead to correct
or at least interestingly structured thoughts. It is therefore not
an accident that 
Augustine deep reflections on time, creation
and God can find today a correspondence in general relativity. It
should suffice to consider that the latter is the most advanced
theory we have ever had on the dynamics of time.

\section*{Acknowledgments}
Work presented at the 5th Iberian Cosmology Meeting held in Porto,
March 29-31, 2010. I would like to thank the colleagues who have
encouraged me in pursuing the investigation of this alternative
model for the beginning of the Universe. This work has been
partially supported by GNFM of INDAM and by FQXi.

\subsection*{Appendix: Proof of Theorem \ref{vgs}.}

Recall that a future lightlike ray is a future inextendible achronal
causal curve, in particular it is a lightlike geodesic. Past
lightlike rays are defined analogously. A lightlike line is an
achronal inextendible causal curve. In particular a lightlike line
is a lightlike geodesic without conjugate points. The boundary of a
set is denoted with a dot.

For the proof of the next lemma see \cite[Prop.\ 2]{kriele89}, or
the proof of \cite[Theorem 12]{minguzzi07d}.

\begin{lemma} \label{vhq}
Let $[r]$ be a chronology violating class. If $p \in \dot{[r]}$ then
through $p$ passes a future lightlike ray contained in $\dot{[r]}$
or a past lightlike ray contained in $\dot{[r]}$ (and possibly
both).
\end{lemma}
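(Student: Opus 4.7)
My plan is a limit-curve argument applied to a sequence of closed timelike curves through points of $[r]$ accumulating at $p$, combined with a push-up argument that exploits the fact that $[r]$ is a single equivalence class under $\sim$.

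Since $p\in\dot{[r]}\subset\overline{[r]}$ and $[r]$ is open, there is a sequence $p_n\in[r]$ with $p_n\to p$, and $p\notin[r]$. Each relation $p_n\ll p_n$ yields a closed timelike curve $\gamma_n$ through $p_n$. I would fix a complete Riemannian metric $h$ on $M$, reparametrize each $\gamma_n$ by $h$-arclength, and extend it periodically to a bi-inextendible causal curve $\tilde\gamma_n\colon \mathbb{R}\to M$ with $\tilde\gamma_n(0)=p_n$. The standard limit curve theorem then produces, along a subsequence, a bi-inextendible causal curve $\sigma\colon \mathbb{R}\to M$ with $\sigma(0)=p$ and $\sigma(\mathbb{R})\subset\overline{[r]}$, since each $\sigma(t)$ is a limit of points of $\tilde\gamma_n\subset[r]$.

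The crux is to show that at least one of $\sigma|_{[0,\infty)}$, $\sigma|_{(-\infty,0]}$ is achronal. Suppose neither is; choose $0\le t_1<t_2$ with $\sigma(t_1)\ll\sigma(t_2)$ and $s_1<s_2\le 0$ with $\sigma(s_1)\ll\sigma(s_2)$. Push-up applied with $p\le\sigma(t_1)$ and $\sigma(s_2)\le p$ gives $p\ll\sigma(t_2)$ and $\sigma(s_1)\ll p$. Since $I^\pm(p)$ are open and $\sigma(t_2),\sigma(s_1)\in\overline{[r]}$, I may pick $s\in I^+(p)\cap[r]$ and $s'\in I^-(p)\cap[r]$. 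As $s$ and $s'$ lie in the same equivalence class $[r]$, we have $s\ll s'$, so the chain $p\ll s\ll s'\ll p$ forces $p\ll p$, contradicting $p\notin[r]$. This is where the fact that $[r]$ is a single chronological equivalence class — rather than an arbitrary chronology-violating region — is essential, and it is the main obstacle of the proof.

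Assume without loss of generality that $\sigma|_{[0,\infty)}$ is achronal. An achronal causal curve must be a lightlike geodesic (any timelike segment or roundable corner would violate achronality), and it is future-inextendible by inheritance from $\sigma$; hence $\sigma|_{[0,\infty)}$ is a future lightlike ray through $p$. Finally, to see it is contained in $\dot{[r]}$: if $\sigma(t_0)\in[r]$ for some $t_0>0$, then $\sigma(t_0)\ll\sigma(t_0)$, and push-up applied to $p\le\sigma(t_0)$ yields $p\ll\sigma(t_0)$, contradicting the achronality of $\sigma|_{[0,\infty)}$. Combined with $p\notin[r]$ this gives $\sigma([0,\infty))\subset\overline{[r]}\setminus[r]=\dot{[r]}$; the symmetric situation produces a past lightlike ray in $\dot{[r]}$ when only the past half of $\sigma$ is achronal.
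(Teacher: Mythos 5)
Your argument is correct and is essentially the limit-curve proof that the paper delegates to its references (Kriele's Prop.~2 and the proof of Theorem~12 in the stable-causality paper): accumulate closed timelike curves of $[r]$ at $p$, extract a bi-inextendible limit causal curve in $\overline{[r]}$, and show one half must be achronal, hence a lightlike ray in $\dot{[r]}$. Only a cosmetic point: in the crux step the chain $p\ll s\ll s'\ll p$ with $s\in[r]$ gives more than $p\ll p$ --- it gives $p\sim s$ and hence $p\in[r]$, which is the contradiction with $p\in\dot{[r]}$ and $[r]$ open that you actually need (and your reduction to $t_1<t_2$ when achronality fails deserves the one-line push-up remark, which is immediate).
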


\begin{lemma} \label{kjh}
Let $[r]$ be a chronology violating class such that $I^{+}([r])=M$
then $\dot{[r]}$, is generated by future lightlike rays contained in
$\dot{[r]}$ and $J^{-}(\overline{[r]}\,)=\overline{[r]}$.
\end{lemma}

\noindent {\em Proof}. Let $p \in \dot{[r]}$ then since $p \in
M=I^{+}([r])$ it cannot be $p\in I^{-}([r])$ otherwise $p\ll r\ll
p$, i.e.\ $p\in [r]$, a contradiction. As $p\in \dot{[r]} \backslash
I^{-}([r])$, there is a sequence $p_n \in [r]$, $p_n \to p$. Since
$p_n \in [r]$ there are timelike curves $\sigma_n$ entirely
contained in $[r]$ which connect $p_n$ to $r$. By the limit curve
theorem \cite{minguzzi07c} there is either (a) a limit continuous
causal curve connecting $p$ to $r$, in which case as $[r]$ is open,
$p \in I^{-}([r])$, a contradiction, or (b) a limit future
inextendible continuous causal curve $\sigma$ starting from $p$ and
contained in $\overline{[r]}$. Actually $\sigma$ is contained in
$\dot{[r]}$ otherwise $p \in I^{-}([r])$, a contradiction. Moreover,
$\sigma$ is a future lightlike ray, otherwise there would be $q \in
\dot{[r]}\cap \sigma$, $p\ll q$ and as $I^{+}$ is open $p \in
I^{-}([r])$, a contradiction.

For the last equality, assume by contradiction, $q \in
J^{-}(\overline{[r]})\backslash\overline{[r]}$. Since $q \in
M=I^{+}(r)$ there is a timelike curve joining $r$ to $q$ and a
causal curve joining $q$ to $\overline{[r]}$. By making a small
variation starting near $q$ we get a timelike curve from $r$ to
$\overline{[r]}$, and hence equivalently, from $r$ to $r$ passing
arbitrarily close to $q$, thus $q \in \overline{[r]}$, a
contradiction. $\square$ \\

Here I give the proof of theorem \ref{vgs}. It is a non-trivial
generalization over the main theorem contained  in
\cite{minguzzi07d}. \\


 \noindent {\em Proof}. Consider the spacetime $N=M\backslash
\overline{[r]}$ with the induced metric, and denote by $J^{+}_N$ its
causal relation. This spacetime is clearly chronological and in fact
strongly causal. Indeed, if strong causality would fail at $p \in N$
then there would be sequences $p_n, q_n \to p$, and causal curves
$\sigma_n$ of endpoints $p_n, q_n$, entirely contained in $N$, but
all escaping and reentering some neighborhood of $p$. By an
application of the limit curve theorem \cite{minguzzi07c,beem96} on
the spacetime $M$ there would be an inextendible continuous causal
curve   $\sigma$ passing through $p$ and contained in $\bar{N}$ to
which a reparametrized subsequence $\sigma_n$ converges uniformly on
compact subsets ($\sigma$ can possibly be closed). The curve
$\sigma$  must be achronal otherwise one would easily construct a
closed timelike curve intersecting $N$ (a piece of this curve would
be a segment of some $\sigma_n$ thus intersecting $N$). Thus
$\sigma$ is a lightlike line. If the line is entirely contained in
$N$ then it is inextendible in the spacetime $(N,g_N)$ and being
complete by assumption, since null genericity and null convergence
hold, there would be two conjugate points, a fact which contradicts
the achronality of $\sigma$.


The possibility that $\sigma$ intersects $\dot{N}=\dot{[r]}$ leads
also to a contradiction because  $\sigma$ cannot intersect
$\dot{[r]}$ in the causal future of $p$ as
$J^{-}(\overline{[r]}\,)=\overline{[r]}$. But if it intersects
$\dot{[r]}$ in the causal past of $p$, $\sigma$ cannot be tangent to
the generators of $\dot{[r]}$ otherwise by Lemma \ref{kjh}  $p \in
\dot{[r]}$, a contradiction, thus after the intersection with
$\dot{[r]}$ (again by Lemma \ref{kjh}) $\sigma$ remains in
$\dot{[r]}$. This fact implies that $\sigma$ is not achronal, as it
has a corner, a contradiction. We conclude that strong causality
holds on $N$.
%
%

 The next step is to prove  that $\overline{J^{+}_N}$
is transitive. In this case $N$ would be causally easy
\cite{minguzzi08b} and hence stably causal (thus admitting time
functions). The transitivity of $\overline{J^{+}_N}$ is proved as
done in \cite[Theorem 5]{minguzzi07d}, the only difference is that
the argument allows us only to prove that if, $x,y, z \in N$, $(x,y)
\in \overline{J^{+}_N}$ and $(y,z) \in \overline{J^{+}_N}$ then
$(y,z) \in \overline{J^{+}}(=\overline{I^{+}})$ as the limit causal
curve passing through $y$ may intersect $\dot{N}$. However, there
are neighborhoods $U$ and $V$ such that any timelike curve
connecting $U\ni x$, $U \subset N$ to $V \ni z$, $V \subset N$ must
stay in $N$, because otherwise there would be some $w \in
\overline{[r]}$ such that $x' \le w$, with $x' \in U$. This is
impossible because by proposition \ref{kjh},
$J^{-}(\overline{[r]}\,) \subset \overline{[r]}$.  $\square$ \\




\end{document}